\numberwithin{equation}{section}
\newtheorem{Theorem}{Theorem}[section]
\newtheorem{Proposition}[Theorem]{Proposition}
\begin{document}

\allowdisplaybreaks

\newcommand{\arXivNumber}{1712.01549}

\renewcommand{\PaperNumber}{017}

\FirstPageHeading

\ShortArticleName{Evolutionary Hirota Type (2+1)-Dimensional Bi-Hamiltonian Equations}

\ArticleName{Evolutionary Hirota Type (2+1)-Dimensional\\ Equations: Lax Pairs, Recursion Operators\\ and Bi-Hamiltonian Structures}

\Author{Mikhail B. SHEFTEL~$^\dag$ and Devrim YAZICI~$^\ddag$}

\AuthorNameForHeading{M.B.~Sheftel and D.~Yaz{\i}c{\i}}

\Address{$^\dag$~Department of Physics, Bo\u{g}azi\c{c}i University, Bebek, 34342 Istanbul, Turkey}
\EmailD{\href{mailto:mikhail.sheftel@boun.edu.tr}{mikhail.sheftel@boun.edu.tr}}

\Address{$^\ddag$~Department of Physics, Y{\i}ld{\i}z Technical University, Esenler, 34220 Istanbul, Turkey}
\EmailD{\href{mailto:yazici@yildiz.edu.tr}{yazici@yildiz.edu.tr}}

\ArticleDates{Received December 06, 2017, in f\/inal form March 02, 2018; Published online March 07, 2018}

\Abstract{We show that evolutionary Hirota type Euler--Lagrange equations in $(2+1)$ dimensions have a symplectic Monge--Amp\`ere form.
We consider integrable equations of this type in the sense that they admit inf\/initely many hydrodynamic reductions and determine Lax pairs for them. For two seven-parameter families of integrable equations converted to two-component form we have constructed Lagrangians, recursion operators and bi-Hamiltonian representations. We have also presented a six-parameter family of tri-Hamiltonian systems.}

\Keywords{Lax pair; recursion operator; Hamiltonian operator; bi-Hamiltonian system}

\Classification{35Q75; 37K05; 37K10}

\section{Introduction}

We study recursion operators, Lax pairs and bi-Hamiltonian representations for $(2+1)$-dimen\-sional equations of the evolutionary Hirota type
\begin{gather}
 u_{tt} = f(u_{t1},u_{t2},u_{11},u_{12},u_{22}). \label{Hirota}
\end{gather}
Here $u=u(t,z_1,z_2)$ and the subscripts denote partial derivatives of $u$, namely, $u_{ij} = \partial^2 u/\partial z_i\partial z_j$, $u_{ti} = \partial^2 u/\partial t \partial z_i$. Equations of this type arise in a wide range of applications including non-linear physics, general relativity, dif\/ferential geometry and integrable systems. Some examples are the Khokhlov--Zabolotskaya (dKP) equation in non-linear acoustics and the theory of Einstein--Weyl structures and the Boyer--Finley equation in the theory of self-dual gravity.

A lot of work has been done by E.~Ferapontov et~al.\ for studying integrability of equation~\eqref{Hirota} which is understood as the existence of inf\/initely many hydrodynamic reductions (see~\cite{ferdub,ferkhus3,ferkhus2,ferkhus1,fer4} and references therein). Ferapontov et~al.\ in~\cite{ferkhus3} derived integrability condition which is equivalent to the property of equation~\eqref{Hirota} to be linearizable by contact transformations. This does not mean that there is a straightforward way to obtain bi-Hamiltonian structures of the nonlinear equations~\eqref{Hirota} by a direct transfer of such structures from the linear equation, because an arbitrary transformation in the space of second partial derivatives of the unknown will not preserve the bi-Hamiltonian structure of the equation.

Our goal here is to study bi-Hamiltonian structures of the integrable equations of the \linebreak form~\eqref{Hirota} together with Lax pairs and recursion operators. We utilize the method which we used earlier~\cite{nns,sy,sym} for constructing a degenerate Lagrangian for two-component evolutionary form of the equation and using Dirac's theory of constraints~\cite{dirac} in order to obtain Hamiltonian form of the system.

Our approach here starts with description of all equations \eqref{Hirota} which have the Euler--Lagrange form~\cite{olv}. We do not consider here the equations of the form~\eqref{Hirota} which become Lagrangian only after multiplication by an integrating factor of variational calculus, postponing the appropriate generalization for a further publication. We f\/ind that the Lagrangian evolutionary Hirota-type equations have a symplectic Monge--Amp\`ere form and we determine their Lagrangians. Then we study recursion relations for symmetries and Lax pairs for the Lagrangian equations. Here our starting point is to convert the symmetry condition into a ``skew-factorized'' form. This approach extends the method of A.~Sergyeyev~\cite{Artur} for constructing recursion ope\-ra\-tors. According to~\cite{Artur} one constructs the recursion operator from a certain Lax pair which, in turn, is typically built from the original Lax pair for the equation under study. On the other hand, below we construct such a special Lax pair using the skew-factorized form of the linearized equation (symmetry condition) rather than a previously known Lax pair, and then apply the construction from~\cite{Artur} to obtain the recursion operator.

The next step is to transform Lagrangian equations converted into a two-component form to a Hamiltonian system. Finally, composing a recursion operator with a Hamiltonian operator we obtain the second Hamiltonian operator and also f\/ind the corresponding Hamiltonian density.
Thus, we end up with a bi-Hamiltonian representation of an integrable equation~\eqref{Hirota} in a~two-component form. In this way, we obtain two seven-parameter families of bi-Hamiltonian systems and a~six-parameter family of tri-Hamiltonian systems.

The paper is organized as follows. In Section~\ref{Lagrange}, we show that all equations~\eqref{Hirota} of the Euler--Lagrange form have the symplectic Monge--Amp\`ere form and we derive a Lagrangian for such equations. In Section~\ref{symmetry}, we analyze the symmetry condition for the Lagrangian equations. In Section~\ref{recursrel}, using an integrability condition, we convert the symmetry condition into a ``skew-factorized'' form and immediately extract Lax pair and recursion relations for symmetries from the symmetry condition in this form. In Section~\ref{two-comp}, we convert our equation in a two-component form and derive a degenerate Lagrangian for this system. In Section~\ref{Hamilton}, we transform the Lagrangian system into Hamiltonian system using the Dirac's theory of constraints. We obtain the Hamiltonian operator~$J_0$ and corresponding Hamiltonian density~$H_1$. In Section~\ref{recursoper}, we derive a recursion operator~$R$ in a $2\times 2$ matrix form using recursion relations for the two-component form of the equation. In Section~\ref{bi-Ham}, by composing the recursion operator with the Hamiltonian operator $J_0$ we obtain the second Hamiltonian operator $J_1=RJ_0$ and the corresponding Hamiltonian density~$H_0$ with one additional ``Hamiltonian'' constraint for coef\/f\/icients. Thus, we obtain a seven-parameter family of bi-Hamiltonian systems because of the two constraints on nine coef\/f\/icients: integrability condition and Hamiltonian condition.

We consider also an alternative skew-factorized representation of the symmetry condition which implies dif\/ferent Lax pair, another recursion operator and a dif\/ferent seven-parameter family of bi-Hamiltonian systems under a dif\/ferent additional constraint on the coef\/f\/icients. If, in addition, we require that both additional constraints coincide and are compatible with the integrability condition, we obtain a six-parameter family of tri-Hamiltonian systems.

In Sections~\ref{recursrel}, \ref{recursoper}, and \ref{bi-Ham}, we treat separately each of the two generic cases of equation~\eqref{1comp} when none of the coef\/f\/icients $c_1$, $c_2$, $c_3$ vanishes, particular cases when either $c_1=0$ or $c_2=0$ which are obtained as a specialization of one of the generic cases, and the special case $c_3=0$ which cannot be obtained from the generic cases but should be treated independently.

\section{Lagrangian equations of evolutionary Hirota type}\label{Lagrange}

We start with equation \eqref{Hirota} in the form
\begin{gather}
 F \equiv - u_{tt} + f(u_{t1},u_{t2},u_{11},u_{12},u_{22}) = 0. \label{F}
\end{gather}
The Fr\'echet derivative operator (linearization) of equation \eqref{F} reads
\begin{gather*}
 D_F = - D_t^2 + f_{u_{t1}}D_tD_1 + f_{u_{t2}}D_tD_2 + f_{u_{11}}D_1^2 + f_{u_{12}}D_1D_2 + f_{u_{22}}D_2^2, 
\end{gather*}
where $D_i\equiv D_{z_i}, D_t$ denote operators of total derivatives. The adjoint Fr\'echet derivative operator has the form
\begin{gather*}
 D^*_F = - D_t^2 + D_tD_1f_{u_{t1}} + D_tD_2f_{u_{t2}} + D_1^2f_{u_{11}} + D_1D_2f_{u_{12}} + D_2^2f_{u_{22}}. 
\end{gather*}
According to Helmholtz conditions \cite{olv}, equation \eqref{F} is an Euler-Lagrange equation for a~variational problem if\/f its Fr\'echet derivative is self-adjoint, $D^*_F =D_F$, or explicitly, equating to zero coef\/f\/icients of $D_t$, $D_1$, $D_2$ and the term without operators of total derivatives, we obtain four equations on~$f$
\begin{gather*}
 D_1[f_{u_{t1}}] + D_2[f_{u_{t2}}] = 0,\\ 
 D_t[f_{u_{t1}}] + 2D_1[f_{u_{11}}] + D_2[f_{u_{12}}] = 0,\\ 
 D_t[f_{u_{t2}}] + 2D_2[f_{u_{22}}] + D_1[f_{u_{12}}] = 0,\\ 
D_tD_1[f_{u_{t1}}] + D_tD_2[f_{u_{t2}}] + D_1^2[f_{u_{11}}] + D_1D_2[f_{u_{12}}] + D_2^2[f_{u_{22}}] = 0. 
\end{gather*}
The general solution of these equations for $f$ implies the Lagrangian evolutionary Hirota equation~\eqref{Hirota} to have symplectic Monge--Amp\`ere form
\begin{gather}
 u_{tt} = c_1(u_{1t}u_{12}-u_{2t}u_{11}) + c_2(u_{1t}u_{22}-u_{2t}u_{12}) + c_3\big(u_{11}u_{22}-u_{12}^2\big) + c_4u_{1t} + c_5u_{2t}\nonumber\\
\hphantom{u_{tt} =}{} + c_6u_{11} + c_7u_{12} + c_8u_{22} + c_9. \label{1comp}
\end{gather}
A Lagrangian for the equation \eqref{1comp} is readily obtained by applying the homotopy formula \cite{olv} for $F = - u_{tt} + f$
\begin{gather*} L[u] = \int_0^1 u\cdot F[\lambda u]{\rm d}\lambda\end{gather*}
with the result
\begin{gather}
L = -\frac{1}{2}uu_{tt} + \frac{u}{3}\bigl[c_1(u_{1t}u_{12}-u_{2t}u_{11}) + c_2(u_{1t}u_{22}-u_{2t}u_{12}) + c_3\big(u_{11}u_{22}-u_{12}^2\big)\bigr] \nonumber\\
\hphantom{L=}{} + \frac{u}{2}(c_4u_{1t} + c_5u_{2t} + c_6u_{11} + c_7u_{12} + c_8u_{22}) + c_9u.\label{L}
\end{gather}

\section{Symmetry condition}\label{symmetry}

In the following it will be useful to introduce operator of derivative in the direction of the vector $\vec{c}=(c_1,c_2)$: $\nabla_c = \vec{c}\cdot\nabla = c_1D_1 + c_2D_2$, so that equation~\eqref{1comp} may be written as
\begin{gather}
 u_{tt} = u_{1t}\nabla_c(u_2) - u_{2t}\nabla_c(u_1) + c_3\big(u_{11}u_{22}-u_{12}^2\big)\nonumber\\
 \hphantom{u_{tt} =}{} + c_4u_{1t} + c_5u_{2t} + c_6u_{11} + c_7u_{12} + c_8u_{22} + c_9.\label{eqnabla}
\end{gather}

Symmetry condition is the dif\/ferential compatibility condition of \eqref{1comp} and the Lie equation $u_\tau = \varphi$, where
$\varphi$ is the symmetry characteristic and $\tau$ is the group parameter. It has the form of Fr\'echet derivative (linearization) of equation~\eqref{eqnabla}
\begin{gather}
 \varphi_{tt} = \nabla_c(u_2)\varphi_{1t} + u_{1t}\nabla_c(\varphi_2) - \nabla_c(u_1)\varphi_{2t} - u_{2t}\nabla_c(\varphi_1)
\nonumber\\
\hphantom{\varphi_{tt} =}{} + c_3(u_{22}\varphi_{11} + u_{11}\varphi_{22} - 2u_{12}\varphi_{12}) + c_4\varphi_{1t} + c_5\varphi_{2t} + c_6\varphi_{11} + c_7\varphi_{12} + c_8\varphi_{22}.\label{symcond}
\end{gather}
It is convenient to introduce the following dif\/ferential operators
\begin{gather*}
L_{12(1)} = u_{12}D_1 - u_{11}D_2,\qquad L_{12(2)} = u_{22}D_1 - u_{12}D_2,\\
 L_{12(t)} = u_{2t}D_1 - u_{1t}D_2 = v_2D_1 - v_1D_2,\qquad v = u_t,
\end{gather*}
so that the symmetry condition \eqref{symcond} becomes
\begin{gather} \big\{c_1(D_tL_{12(1)} - D_1L_{12(t)}) + c_2(D_tL_{12(2)} - D_2L_{12(t)}) + c_3(D_1L_{12(2)} - D_2L_{12(1)}) - D_t^2\nonumber\\
\qquad{} + c_4D_1D_t + c_5D_2D_t + c_6D_1^2 + c_7D_1D_2 + c_8D_2^2\big\}\varphi = 0.\label{DE}
\end{gather}

\section{Recursion relations and Lax pairs}\label{recursrel}

E.~Ferapontov et al.\ in \cite{ferkhus3} have derived an integrability condition for the symplectic Monge--Amp\`ere equation of the following general form (equation~(21) in~\cite{ferkhus3})
\begin{gather}
 \epsilon \det\left[\begin{matrix}
 u_{11} & u_{12} & u_{13}\\
 u_{12} & u_{22} & u_{23}\\
 u_{13} & u_{23} & u_{33}
 \end{matrix}\right] + h_1\big(u_{22}u_{33} - u_{23}^2\big) + h_2\big(u_{11}u_{33} - u_{13}^2\big) + h_3\big(u_{11}u_{22} - u_{12}^2\big)\nonumber \\
\qquad{} + g_1(u_{11}u_{23} - u_{12}u_{13})+ g_2(u_{22}u_{13} - u_{12}u_{23})
 + g_3(u_{33}u_{12} - u_{13}u_{23})\nonumber\\
 \qquad{} + s_1u_{11} + s_2u_{22}+ s_3u_{33} + \tau_1u_{23} + \tau_2u_{13} + \tau_3u_{12} + \nu = 0 \label{21}
\end{gather}
with constant coef\/f\/icients. Here integrability means that the equation \eqref{21} admits inf\/initely many hydrodynamic reductions~\cite{ferkhus1}. For our evolutionary equation \eqref{1comp} the coef\/f\/icients in \eqref{21} are
\begin{gather}
 h_1= h_2 = h_3 = g_3 = 0,\qquad g_1 = - c_1,\qquad g_2 = c_2,\qquad s_1 = c_6,\qquad s_2 = c_8,\nonumber\\
s_3 = - 1, \qquad\tau_1 = c_5,\qquad \tau_2 = c_4,\qquad \tau_3 = c_7,\qquad \nu = c_9. \label{coeff}
\end{gather}
The integrability condition given in \cite[formula (22)]{ferkhus3} for the equation \eqref{21} has the form
\begin{gather}
 h_1^2s_1^2 + h_2^2s_2^2 + h_3^2s_3^2 + g_1^2s_2s_3 + g_2^2s_1s_3 + g_3^2s_1s_2- 2(h_1h_2s_1s_2 + h_1h_3s_1s_3 + h_2h_3s_2s_3)\nonumber \\
 \qquad {} + 4\epsilon s_1s_2s_3 + 4\nu h_1h_2h_3
 + \epsilon\tau_1\tau_2\tau_3 - \nu g_1g_2g_3 - \epsilon^2\nu^2 - \nu\big(g_1^2h_1 + g_2^2h_2 + g_3^2h_3\big)\nonumber\\
\qquad{} - (g_1\tau_1 + g_2\tau_2 + g_3\tau_3 + 2\epsilon\nu)(h_1s_1 + h_2s_2 + h_3s_3 - \epsilon\nu)\nonumber\\
\qquad{} + 2(g_1h_1s_1\tau_1 + g_2h_2s_2\tau_2 + g_3h_3s_3\tau_3) + \tau_1^2h_2h_3 + \tau_2^2h_1h_3 + \tau_3^2h_1h_2\nonumber\\
\qquad{} - \epsilon(\tau_1^2s_1 + \tau_2^2s_2 + \tau_3^2s_3) + s_1\tau_1g_2g_3 + s_2\tau_2g_1g_3 + s_3\tau_3g_1g_2\nonumber\\
\qquad{} - (g_1h_1\tau_2\tau_3 + g_2h_2\tau_1\tau_3 + g_3h_3\tau_1\tau_2) = 0. \label{integr}
\end{gather}
For the equation \eqref{1comp} due to identif\/ications \eqref{coeff}, the \textit{integrability condition} \eqref{integr} becomes
\begin{gather}
 c_2(c_1c_7 - c_2c_6 + c_3c_4) = c_1(c_1c_8 + c_3c_5) - c_3^2.\label{intcon}
\end{gather}
We show explicitly that any equation of the form \eqref{1comp} satisfying \eqref{intcon} is also integrable in the traditional sense by constructing Lax pair for such an equation.

\subsection{Generic case}

In the integrability condition \eqref{intcon} we assume
\begin{gather*}
 c_1\cdot c_2\cdot c_3\ne 0.
\end{gather*}
The following procedure extends A.~Sergyeyev's method for constructing recursion opera\-tors~\cite{Artur}. Namely, unlike~\cite{Artur}, we start with the skew-factorized form of the symmetry condition and extract from there a Lax pair for symmetries instead of building it from a previously known Lax pair. After that we construct a recursion operator from this newly found Lax pair using Proposition~1 from~\cite{Artur}.

The linear operator of the symmetry condition \eqref{DE} for integrable equations of the form \eqref{1comp} can be presented in the ``skew-factorized'' form
\begin{gather}
 (A_1B_2 - A_2B_1)\varphi = 0. \label{factorDE}
\end{gather}
If we introduce two-dimensional vector operators $\vec{R}=(A_1,A_2)$ and $\vec{S}=(B_1,B_2)$, then the skew-factorized form \eqref{factorDE} becomes
the cross (vector) product $(\vec{R}\times\vec{S})\varphi = 0$. Here dif\/ferential operators $A_i$ and $B_i$ are def\/ined as
\begin{gather}
 A_1 = c_1D_t - c_3D_2,\qquad A_2 = -(c_2D_t + c_3D_1),\nonumber\\
 B_1 = c_1\big(c_3L_{12(2)} - c_1L_{12(t)}\big) + c_1c_6D_1 + (c_1c_7 - c_2c_6 + c_3c_4)D_2, \nonumber \\
 B_2 = c_1\big(c_3L_{12(1)} + c_2L_{12(t)}\big) + (c_3c_4 - c_2c_6)D_1 - c_1c_8D_2 - c_3D_t. \label{AiBi}
\end{gather}
These operators satisfy the commutator relations
\begin{gather}
 [A_1,A_2] = 0,\qquad [A_1,B_2] - [A_2,B_1] = 0,\qquad [B_1,B_2] = 0, \label{commut}
\end{gather}
where the last equation is satisf\/ied on solutions of the equation~\eqref{1comp}.

It immediately follows that the following two operators also commute on solutions
\begin{gather}
 X_1 = \lambda A_1 + B_1,\qquad X_2 = \lambda A_2 + B_2, \qquad [X_1,X_2] = 0, \label{Lax}
\end{gather}
and therefore constitute Lax representation for equation~\eqref{1comp} with $\lambda$ being a spectral parameter.

Symmetry condition in the form \eqref{factorDE} not only provides the Lax pair for equation~\eqref{1comp} but also leads directly to recursion relations for symmetries
\begin{gather}
 A_1\tilde{\varphi} = B_1\varphi,\qquad A_2\tilde{\varphi} = B_2\varphi, \label{recurs}
\end{gather}
where $\tilde{\varphi}$ is a potential for $\varphi$. This follows from a special case of Proposition 1 of \cite{Artur} which we recapitulate here together with its proof for the readers' convenience. Indeed, equations \eqref{recurs} together with~\eqref{commut} imply $(A_1B_2 - A_2B_1)\varphi = [A_1,A_2]\tilde{\varphi}=0$, so $\varphi$ is a symmetry characteristic. Moreover, due to~\eqref{recurs}
\begin{gather*}(A_1B_2 - A_2B_1)\tilde{\varphi} = \bigl([A_1,B_2] - [A_2,B_1] + B_2A_1 - B_1A_2\bigr)\tilde{\varphi} = [B_2,B_1]\varphi = 0, \end{gather*}
which shows that $\tilde{\varphi}$ satisf\/ies the symmetry condition \eqref{factorDE} and hence is also a symmetry. Thus, $\varphi$ is a symmetry whenever so is $\tilde{\varphi}$ and vice versa.
The equations \eqref{recurs} def\/ine an auto-B\"acklund transformation between the symmetry conditions written for $\varphi$ and $\tilde{\varphi}$.
Hence, the auto-B\"acklund transformation for the symmetry condition is nothing else than a recursion operator.
Finally, we must remark that this approach to recursion operators originated from the much older work published in 90th \cite{gut,Marvan, papa}.

The skew-factorized form of the symmetry condition is by no means unique. We can derive another version by using the discrete symmetry transformation
\begin{gather}
 c_1\leftrightarrow -c_2,\qquad c_4\leftrightarrow c_5,\qquad c_6\leftrightarrow c_8,\qquad D_1\leftrightarrow - D_2,\qquad D_t\leftrightarrow -D_t,\qquad v\leftrightarrow -v,\nonumber\\
L_{12(1)} \leftrightarrow L_{12(2)}, \label{discrete}
\end{gather}
while the operator $L_{12(t)}$ is not changed. Applying \eqref{discrete} to operators~\eqref{AiBi} we obtain a new set of operators
\begin{gather}
 A_1 = c_2D_t + c_3D_1,\qquad A_2 = c_3D_2 - c_1D_t,\nonumber\\
 B_1 = -c_2\big(c_3L_{12(1)} + c_2L_{12(t)}\big) + (c_2c_7 - c_1c_8 - c_3c_5)D_1 + c_2c_8D_2, \nonumber \\
 B_2 = c_2\big(c_1L_{12(t)} - c_3L_{12(2)}\big) - c_2c_6D_1 - (c_1c_8 + c_3c_5)D_2 + c_3D_t, \label{AiBi2}
\end{gather}
which also satisfy the skew-factorized form~\eqref{factorDE} of the symmetry condition \eqref{DE} and the same commutator relations~\eqref{commut}. Using these operators in \eqref{Lax} and~\eqref{recurs} we obtain the second Lax pair and another set of recursion relations for symmetries, respectively.

It may be interesting to note algebraic relations between the two recursion operators, namely, determined by the set \eqref{AiBi}, marked below with the superscript~$^{(1)}$, and by the set~\eqref{AiBi2}, marked with the superscript~$^{(2)}$
\begin{gather*}
 A_1^{(2)} = - A_2^{(1)},\qquad A_2^{(2)} = - A_1^{(1)},\\ c_2B_1^{(1)} + c_1B_2^{(2)} = c_3A_1^{(1)},\qquad c_2B_2^{(1)} + c_1B_1^{(2)} = c_3A_2^{(1)}. 
\end{gather*}

\subsection{Particular cases}

We now consider particular cases $c_1=0$, $c_2\ne 0$ and $c_2=0$, $c_1\ne 0$.

In the f\/irst case integrability condition \eqref{intcon} reads
\begin{gather}
 c_1 = 0,\qquad c_2 \ne 0\quad \Longrightarrow \quad c_2(c_2c_6 - c_3c_4) = c_3^2. \label{spec2}
\end{gather}
If we set $c_1=0$ in our f\/irst set of operators \eqref{AiBi} from the generic case, these operators become linearly dependent and the skew-factorized representation \eqref{factorDE} does not reproduce the symmetry condition \eqref{symcond}. Therefore, we have to put $c_1 = 0$ in the second set of operators \eqref{AiBi2} with the result
\begin{gather}
 A_1 = c_2D_t + c_3D_1,\qquad A_2 = c_3D_2,\nonumber\\
B_1 = -c_2\big(c_3L_{12(1)} + c_2L_{12(t)}\big) + (c_2c_7 - c_3c_5)D_1 + c_2c_8D_2, \nonumber \\
B_2 = c_3D_t - c_2c_3L_{12(2)} - c_2c_6D_1 - c_3c_5D_2. \label{AiBic1=0}
\end{gather}

 Operators \eqref{AiBic1=0} satisfy skew-factorized form \eqref{factorDE} of the symmetry condition \eqref{DE} and the commutator relations \eqref{commut}. Therefore, as is shown above, equations \eqref{recurs} yield the recursion relations for symmetries
\begin{gather*}
(c_2D_t + c_3D_1)\tilde{\varphi} = \big\{{-}c_2\big(c_3L_{12(1)} + c_2L_{12(t)}\big) + (c_2c_7 - c_3c_5)D_1 + c_2c_8 D_2\big\}\varphi,\\
c_3D_2\tilde{\varphi} =\big(c_3D_t - c_2c_3L_{12(2)} - c_2c_6D_1 - c_3c_5D_2\big)\varphi, 
\end{gather*}
and the operators
\begin{gather*}
X_1 = \lambda (c_2D_t + c_3D_1) - c_2(c_3L_{12(1)} + c_2L_{12(t)}) + (c_2c_7 - c_3c_5)D_1 + c_2c_8D_2,\\
X_2 = \lambda c_3D_2 + c_3D_t - c_2c_3L_{12(2)} + c_2c_6D_1 + c_3c_5D_2 
\end{gather*}
commute on solutions and so constitute Lax pair for the equation \eqref{1comp} at $c_1=0$.

In the second case $c_2=0$, integrability condition \eqref{intcon} has the form
\begin{gather}
 c_2 = 0,\qquad c_1 \ne 0\quad \Longrightarrow \quad c_1(c_1c_8 + c_3c_5) = c_3^2. \label{spec2'}
\end{gather}
We set $c_2=0$ in our f\/irst set of operators \eqref{AiBi} in the generic case to obtain
\begin{gather}
A_1 = c_1D_t - c_3D_2,\qquad A_2 = - c_3D_1 ,\nonumber\\
B_1 = c_1\big(c_3L_{12(2)} - c_1L_{12(t)}\big) + c_1c_6D_1 + (c_1c_7 + c_3c_4)D_2, \nonumber \\
B_2 = c_1c_3L_{12(1)} + c_3c_4D_1 - c_1c_8D_2 - c_3D_t. \label{AiBic2=0}
\end{gather}
Operators \eqref{AiBic2=0} also satisfy skew-factorized form \eqref{factorDE} of the symmetry condition \eqref{DE} and the commutator relations~\eqref{commut}. Therefore, as is shown above, using these operators in~\eqref{Lax} and~\eqref{recurs} we obtain the Lax pair and recursion relations for symmetries in this case.

We could not use the second family of operators \eqref{AiBi2} at $c_2=0$ because skew-factorized form \eqref{factorDE} would give identical zero instead of reproducing the symmetry condition.

We note again that the skew-factorized form of the invariance condition \eqref{factorDE} is still not unique. One could obtain such a form at $c_2=0$ with a dif\/ferent choice of the operators~$A_i$ and~$B_i$
\begin{gather}
 A_1 = c_1D_1,\qquad A_2 = c_3D_2 - c_1D_t,\nonumber\\
 B_1 = c_1^2L_{12(1)} + (c_1c_5 - c_3)D_2 - c_1D_t, \nonumber \\
 B_2 = c_1\big\{c_3L_{12(2)} - c_1L_{12(t)} + c_4D_t +c_6D_1 + c_7D_2\big\}. \label{AiBispec2a}
\end{gather}
Recursion relations and the Lax pair are still valid with the new def\/initions \eqref{AiBispec2a}. Since the latter choice leads to more complicated recursion operator and second Hamiltonian operator, we stick to our previous def\/initions~\eqref{AiBic2=0}.

Finally, the case $c_1 = 0$ together with $c_2 = 0$ implies $c_3 = 0$ from the integrability condi\-tion~\eqref{intcon} and hence corresponds to linear equation \eqref{1comp}.

\subsection[Special case $c_3=0$]{Special case $\boldsymbol{c_3=0}$}

In the integrability condition \eqref{intcon} we assume
\begin{gather*}
 c_1\cdot c_2\ne 0,\qquad c_3 = 0\quad \Longrightarrow \quad c_1c_2c_7 = c_1^2c_8 + c_2^2c_6.
\end{gather*}
Then the linear operator of the symmetry condition \eqref{DE} can be presented in the skew-factorized form \eqref{factorDE}
with the following operators $A_i$ and $B_i$
\begin{gather}
 A_1 = c_1D_t,\qquad A_2 = c_1D_1 + c_2D_2 = \nabla_c,\nonumber\\
 B_1 = c_1c_2L_{12(t)} - c_2c_6D_1 - c_1c_8D_2, \nonumber \\
 B_2 = c_2\big\{c_1L_{12(1)} + c_2L_{12(2)} + c_4D_1 + c_5D_2 - D_t\big\}. \label{AiBispec1}
\end{gather}
Operators \eqref{AiBispec1} satisfy the commutator relations \eqref{commut} and hence, as is shown above, the equations~\eqref{recurs} produce the recursion relations for symmetries
\begin{gather}
 c_1D_t\tilde{\varphi} =\{c_1c_2L_{12(t)} - c_2c_6D_1 - c_1c_8D_2\}\varphi, \nonumber\\
\nabla_c(\tilde{\varphi}) = c_2\{c_1L_{12(1)} + c_2L_{12(2)}) + c_4D_1 + c_5D_2 - D_t\}\varphi, \label{recurs1}
\end{gather}
and the operators
\begin{gather*}
X_1 = \lambda c_1D_t + c_1c_2L_{12(t)} - c_2c_6D_1 - c_1c_8D_2,\\
X_2 = \lambda \nabla_c + c_2\big\{c_1L_{12(1)} + c_2L_{12(2)} + c_4D_1 + c_5D_2 - D_t\big\}. 
\end{gather*}
commute on solutions and so constitute Lax representation for the equation \eqref{1comp} at $c_3=0$.

We note that if we considered this case as a particular case at $c_3=0$ of the operators~\eqref{AiBi} or~\eqref{AiBi2} of the generic case, then operators $A_i$, $B_i$ would be linearly dependent and the skew-factorized form~\eqref{factorDE} would not yield the symmetry condition~\eqref{DE}.

\section{Two-component form}\label{two-comp}

Introducing the additional dependent variable $v=u_t$, we convert equation \eqref{eqnabla} into the evolutionary system
\begin{gather}
 u_t = v,\nonumber\\
 v_t = v_1\nabla_c(u_2) - v_2\nabla_c(u_1) + c_3\big(u_{11}u_{22}-u_{12}^2\big)\nonumber\\
\hphantom{v_t =}{} + c_4v_1 + c_5v_2 + c_6u_{11} + c_7u_{12} + c_8u_{22} + c_9.\label{2comp}
\end{gather}
Lie equations become $u_\tau = \varphi$, $v_\tau = \psi$, so that $u_t = v$ implies the f\/irst invariance condition $\varphi_t = \psi$.

The second invariance condition is obtained by dif\/ferentiating the second equation in \eqref{2comp} with respect to the group parameter $\tau$
\begin{gather*}
 \psi_t = \nabla_c(u_2)\psi_1 + v_1\nabla_c(\varphi_2) - \nabla_c(u_1)\psi_2 - v_2\nabla_c(\varphi_1)\\
\hphantom{\psi_t =}{} + c_3(u_{22}\varphi_{11} + u_{11}\varphi_{22} - 2u_{12}\varphi_{12}) + c_4\psi_1 + c_5\psi_2
 + c_6\varphi_{11} + c_7\varphi_{12} + c_8\varphi_{22}.
\end{gather*}

The Lagrangian for system \eqref{2comp} is obtained by a suitable modif\/ication of the Lagran\-gian~\eqref{L} of the one-component equation \eqref{1comp}, skipping some total derivative terms
\begin{gather}
 L = u_tv - \frac{v^2}{2} + \frac{u_t}{3} \{c_1(u_2u_{11} - u_1u_{12}) + c_2(u_2u_{12} - u_1u_{22})\}\nonumber \\
 \hphantom{L =}{} - \frac{u_t}{2}(c_4u_1 + c_5u_2) + \frac{c_3}{3}u\big(u_{11}u_{22} - u_{12}^2\big) + \frac{u}{2}(c_6u_{11} + c_7u_{12} + c_8u_{22}) + c_9u.\label{L2}
\end{gather}

\section{Hamiltonian representation}\label{Hamilton}

To transform from Lagrangian to Hamiltonian description, we def\/ine canonical momenta
\begin{gather}
 \pi_u = \frac{\partial L}{\partial u_t} = v + \frac{1}{3}\bigl\{c_1(u_2u_{11} - u_1u_{12}) + c_2(u_2u_{12} - u_1u_{22})\bigr\} - \frac{1}{2}(c_4u_1 + c_5u_2), \nonumber\\
\pi_v = \frac{\partial L}{\partial v_t} = 0, \label{mom}
\end{gather}
which satisfy canonical Poisson brackets
\begin{gather*} \big[\pi_i(z),u^k(z')\big] = \delta_i^k \delta(z-z'),\end{gather*}
where $u^1=u$, $u^2=v$, $z=(z_1,z_2)$, the only nonzero Poisson bracket being $[\pi_u,u] = \delta(z_1-z_1')\delta(z_2-z_2')$. The Lagrangian~\eqref{L2} is degenerate because the momenta cannot be inverted for the velocities. Therefore, following the theory of Dirac's constraints~\cite{dirac}, we impose~\eqref{mom} as constraints
\begin{gather*}
\Phi_u = \pi_u - v - \frac{1}{3}\bigl[c_1(u_2u_{11} - u_1u_{12}) + c_2(u_2u_{12} - u_1u_{22})\bigr] + \frac{1}{2}(c_4 u_1 + c_5 u_2),\nonumber\\
 \Phi_v = \pi_v, 
\end{gather*}
and calculate Poisson bracket for the constraints
\begin{gather*}
K_{11} = [\Phi_u(z_1,z_2),\Phi_{u'}(z'_1,z'_2)], \qquad K_{12} = [\Phi_u(z_1,z_2),\Phi_{v'}(z'_1,z'_2)], \nonumber\\
K_{21} = [\Phi_v(z_1,z_2),\Phi_{u'}(z'_1,z'_2)],\qquad K_{22} = [\Phi_v(z_1,z_2),\Phi_{v'}(z'_1,z'_2)]. 
\end{gather*}
We obtain the following matrix of Poisson brackets, which for convenience we multiply by the overall factor~$(-1)$
\begin{gather}
 K = \left(
 \begin{matrix}
 - K_{11} & - 1 \\
 1 & 0
 \end{matrix}
 \right),
\label{K}
\end{gather}
where
\begin{gather*}
 K_{11} = c_1(u_{11}D_2 - u_{12}D_1) + c_2(u_{12}D_2 - u_{22}D_1) - c_4D_1 - c_5D_2. 
\end{gather*}
The Hamiltonian operator is the inverse to the symplectic operator $J_0 = K^{-1}$
 \begin{gather}
 J_0 = \left(
 \begin{matrix}
 0 & 1\\
 -1 & - K_{11}
 \end{matrix}
 \right) =
 \left(
 \begin{matrix}
 0 & 1\\
 -1 & c_1L_{12(1)} + c_2L_{12(2)} + c_4D_1 + c_5D_2
 \end{matrix}
 \right).\label{hamilton1}
\end{gather}

Operator $J_0$ is Hamiltonian if and only if its inverse $K$ is symplectic \cite{ff}, which means that the volume integral $\Omega = \iiint_V\omega {\rm d}V$ of $\omega = (1/2){\rm d}u^i\wedge K_{ij}{\rm d}u^j$ should be a symplectic form, i.e., at appropriate boundary conditions ${\rm d}\omega = 0$ modulo total divergence. Another way of formulation is to say that the vertical dif\/ferential of $\omega$ should vanish~\cite{Kras}.
In $\omega$ summations over $i$, $j$ run from~1 to~2 and $u^1 = u$, $u^2 = v$. Using~\eqref{K}, we obtain
\begin{gather}
\omega = \frac{1}{2}\bigl[c_1(u_{12}{\rm d}u\wedge {\rm d}u_1 - u_{11}{\rm d}u\wedge {\rm d}u_2) + c_2(u_{22}{\rm d}u\wedge {\rm d}u_1 - u_{12}{\rm d}u\wedge {\rm d}u_2) \nonumber\\
\hphantom{\omega =}{} - c_4{\rm d}u\wedge {\rm d}u_1 - c_5{\rm d}u\wedge {\rm d}u_2 - 2{\rm d}u\wedge {\rm d}v \bigr]. \label{omega}
\end{gather}
Taking exterior derivative of \eqref{omega} and skipping total divergence terms, we have checked that ${\rm d}\omega=0$ which proves that operator~$K$ is symplectic and hence~$J_0$ def\/ined in~\eqref{hamilton1} is indeed a~Hamiltonian operator.

The f\/irst Hamiltonian form of this system is
\begin{gather*}
\left(
\begin{matrix}
 u_t\\
 v_t
\end{matrix}
\right) = J_0
\left(
\begin{matrix}
 \delta_u H_1\\
 \delta_v H_1
\end{matrix}
\right), 
 \end{gather*}
where we still need to determine the corresponding Hamiltonian density $H_1$. We convert~$L$ from~\eqref{L2} to the form
\begin{gather*}
 L = u_t\pi_u - \frac{v^2}{2} + \frac{c_3}{3}u\big(u_{11}u_{22} - u_{12}^2\big) + \frac{u}{2}(c_6u_{11} + c_7u_{12} + c_8u_{22})+ c_9u, 
\end{gather*}
and apply the formula $H_1 = \pi_u u_t + \pi_v v_t - L$, where $\pi_v=0$, with the f\/inal result
\begin{gather*}
 H_1 = \frac{v^2}{2} - \frac{c_3}{3}u\big(u_{11}u_{22} - u_{12}^2\big) - \frac{1}{2} u(c_6u_{11} + c_7u_{12} + c_8u_{22}) - c_9u.
\end{gather*}

\section[Recursion operators in $2\times 2$ matrix form]{Recursion operators in $\boldsymbol{2\times 2}$ matrix form}\label{recursoper}

\subsection[Generic case: $c_1\cdot c_2\cdot c_3\ne 0$]{Generic case: $\boldsymbol{c_1\cdot c_2\cdot c_3\ne 0}$}

We def\/ine two-component symmetry characteristic $(\varphi,\psi)^{\rm T}$ (where $^{\rm T}$ means transposed matrix) with $\psi=\varphi_t$ and $(\tilde{\varphi},\tilde{\psi})^{\rm T}$ with $\tilde{\psi}=\tilde{\varphi}_t$ for the original and transformed symmetries, respectively. The recursion relations \eqref{recurs} with the use of \eqref{AiBi} for the operators~$A_i$,~$B_i$ take the form
\begin{gather}
 c_1\tilde{\psi} - c_3D_2\tilde{\varphi} = B_1\varphi \equiv \big\{c_1(c_3L_{12(2)} - c_1L_{12(t)}) + c_1c_6 D_1 + (c_3c_4 + c_1c_7 - c_2c_6)D_2\big\}\varphi, \!\!\!\label{phipsi}\\
 -c_2\tilde{\psi} - c_3D_1\tilde{\varphi} = B_2\varphi
\equiv \big\{c_1(c_3L_{12(1)} + c_2L_{12(t)}) + (c_3c_4 - c_2c_6)D_1 - c_1c_8D_2\big\}\varphi - c_3\psi.\nonumber
\end{gather}
Combining these two equations to eliminate f\/irst $\tilde{\psi}$ and then $\tilde{\varphi}$, we obtain
\begin{gather*}
\nabla_c(\tilde{\varphi}) = c_1\big\{\nabla_c(u_1)\varphi_2 - \nabla_c(u_2)\varphi_1\big\} - c_1c_4\varphi_1 - (c_1c_5-c_3)\varphi_2 + c_1\psi,\nonumber \\
\nabla_c(\tilde{\psi}) = c_1\big\{\nabla_c(v_1\varphi_2 - v_2\varphi_1) + c_3(u_{22}\varphi_{11} + u_{11}\varphi_{22} - 2u_{12}\varphi_{12})\nonumber \\
\hphantom{\nabla_c(\tilde{\psi}) =}{} + c_6\varphi_{11} + c_7\varphi_{12} + c_8\varphi_{22}\big\} + c_3\psi_2, 
\end{gather*}
where the subscripts of $\varphi$ and $\psi$ denote partial derivatives. Applying the inverse operator $\nabla_c^{-1}$, which is def\/ined to satisfy the relations $\nabla_c^{-1}\nabla_c = 1$, we obtain the explicit form of recursion relations
\begin{gather}
\tilde{\varphi} = \nabla_c^{-1} \big\langle
 c_1\big\{\nabla_c(u_1)\varphi_2 - \nabla_c(u_2)\varphi_1 - c_4\varphi_1 - c_5\varphi_2\big\} + c_3\varphi_2 + c_1\psi\big\rangle, \nonumber \\
 \tilde{\psi} = c_1(v_1\varphi_2 - v_2\varphi_1) + \nabla_c^{-1} \big\langle c_1\big\{c_3(u_{22}\varphi_{11} + u_{11}\varphi_{22} - 2u_{12}\varphi_{12}) \nonumber \\
\hphantom{\tilde{\psi} =}{} + c_6\varphi_{11} + c_7\varphi_{12} + c_8\varphi_{22}\big\} + c_3\psi_2\big\rangle. \label{tildephipsi}
\end{gather}
Here an important remark is due. The operator $\nabla_c^{-1}$ can make sense merely as a \textit{formal} inverse of~$\nabla_c$. Thus, the relations~\eqref{tildephipsi} are formal as well. The proper interpretation of the quantities like $\nabla_c^{-1}$ and of~\eqref{tildephipsi} requires the language of dif\/ferential coverings, see the original papers~\smash{\cite{gut,Marvan}} and the recent survey~\cite{Kras}.

In a two component form, the recursion relations \eqref{phipsi} read
\begin{gather}
 \left(\begin{matrix}
 \tilde{\varphi}\\
 \tilde{\psi}
 \end{matrix}
 \right) = R\left(
 \begin{matrix}
 \varphi\\
 \psi
 \end{matrix}\right)\label{recurs2}
\end{gather}
with the recursion operator $R$ in the $2\times 2$ matrix form
\begin{gather}
R = \left(
\begin{matrix}
 R_{11} & c_1\nabla_c^{-1}\\
 R_{21} & c_3\nabla_c^{-1}D_2
\end{matrix}\right)\label{R}
\end{gather}
with the matrix elements
\begin{gather}
R_{11} = \nabla_c^{-1} \big\langle
 c_1\big\{\nabla_c(u_1)D_2 - \nabla_c(u_2)D_1 - c_4D_1 - c_5D_2\big\} + c_3D_2\big\rangle, \nonumber\\
R_{21} = c_1 \big\langle v_1D_2 - v_2D_1 + \nabla_c^{-1}\big\{c_3\big(u_{22}D_1^2 + u_{11}D_2^2 - 2u_{12}D_1D_2\big) \nonumber\\
 \hphantom{R_{21} =}{} + c_6D_1^2 + c_7D_1D_2 + c_8D_2^2\big\}\big\rangle. \label{Rij}
\end{gather}

Next, we use the alternative set \eqref{AiBi2} of operators $A_i$, $B_i$ in the recursion relations \eqref{recurs} presented in a two-component form
\begin{gather*}
 c_2\tilde{\psi} + c_3D_1\tilde{\varphi} = B_1\varphi \equiv \big\{{-}c_2(c_3L_{12(1)} + c_2L_{12(t)}) + (c_2c_7 - c_1c_8 - c_3c_5) D_1 + c_2c_8 D_2\big\}\varphi, \\
c_3D_2\tilde{\varphi} - c_1\tilde{\psi} = B_2\varphi
 \equiv \big\{c_2(c_1L_{12(t)} - c_3L_{12(2)}) - c_2c_6D_1 - (c_1c_8 + c_3c_5)D_2\big\}\varphi + c_3\psi.
\end{gather*}
Combining these two equations to eliminate f\/irst $\tilde{\psi}$ and then $\tilde{\varphi}$, we obtain the explicit form of recursion relations
\begin{gather*}
 \tilde{\varphi} = \nabla_c^{-1} \big\{
 {-} c_2(c_1L_{12(1)} + c_2L_{12(2)}) - (c_2c_4+c_3)D_1 - c_2c_5D_2\big\}\varphi + c_2\nabla_c^{-1}\psi, \nonumber \\
 \tilde{\psi} = \left\langle c_3\nabla_c^{-1}D_1\left\{c_1L_{12(1)} + c_2L_{12(2)} + \left(c_4 + \frac{c_3}{c_2}\right)D_1 + c_5D_2\right\} \right.
 \\
 \left.\hphantom{\tilde{\psi} =}{} + \left\{ - (c_3L_{12(1)} + c_2L_{12(t)}) + \frac{1}{c_2} (c_2c_7-c_1c_8-c_3c_5)D_1 + c_8D_2\right\}\right\rangle\varphi - c_3\nabla_c^{-1}D_1\psi,
\end{gather*}
and
we immediately extract the second recursion operator $R'$ in the $2\times 2$ matrix form
\begin{gather}
 R' = \left(\begin{matrix}
 R'_{11} & c_2\nabla_c^{-1} \\
 R'_{21} & - c_3\nabla_c^{-1}D_1
 \end{matrix} \right), \label{R'}
\end{gather}
where
\begin{gather}
 R'_{11} = - c_2\nabla_c^{-1} \left\{
 c_1L_{12(1)} + c_2L_{12(2)} + \left(c_4 + \frac{c_3}{c_2}\right)D_1 + c_5D_2\right\}, \nonumber \\
 R'_{21} = c_3\nabla_c^{-1}D_1\left\{c_1L_{12(1)} + c_2L_{12(2)} + \left(c_4 + \frac{c_3}{c_2}\right)D_1 + c_5D_2\right\} \nonumber \\
\hphantom{R'_{21} =}{} - (c_3L_{12(1)} + c_2L_{12(t)}) + \frac{1}{c_2} (c_2c_7-c_1c_8-c_3c_5)D_1 + c_8D_2. \label{R'ij}
\end{gather}

\subsection{Particular cases}

We consider again particular cases $c_1=0$, $c_2\ne 0$ and $c_2=0$, $c_1\ne 0$.

As we know from Section~\ref{recursrel}, the f\/irst case, $c_1=0$, $c_2\ne 0$ should be considered as a particular case of the second recursion operator $R'$ from~\eqref{R'} and~\eqref{R'ij} with the result
\begin{gather*}
 R' = \left(
 \begin{matrix}
 \displaystyle - c_2D_2^{-1}\!\left(L_{12(2)} + \frac{c_6}{c_3}D_1\right) - c_5, & D_2^{-1} \vspace{1mm}\\
 R'_{21}, & -\displaystyle\frac{c_3}{c_2}D_2^{-1}D_1
 \end{matrix}
 \right), 
\end{gather*}
where
\begin{gather*}
 R'_{21} = c_3 D_2^{-1}D_1\left(L_{12(2)} + \frac{c_6}{c_3}D_1\right) -
 \big(c_3L_{12(1)} + c_2L_{12(t)}\big) + c_7D_1 + c_8D_2 
\end{gather*}
and the integrability condition \eqref{spec2} has been used.

The second case, $c_2=0$, $c_1\ne 0$ should be considered as a particular case of the f\/irst recursion operator $R$ from \eqref{R} and \eqref{Rij} with the result
\begin{gather*}
R = \left(
\begin{matrix}
 R_{11} & D_1^{-1}\vspace{1mm}\\
 R_{21} & \displaystyle \frac{c_3}{c_1}D_1^{-1}D_2
\end{matrix}\right)
\end{gather*}
with the matrix elements
\begin{gather*}
 R_{11} = - D_1^{-1} \big\{c_1L_{12(1)} + (c_1c_5 - c_3)D_2\big\} - c_1c_4,\\
 R_{21} = - c_1 L_{12(t)} + c_6D_1 + c_7D_2 + D_1^{-1}\big\{c_3\big(u_{22}D_1^2 + u_{11}D_2^2 - 2u_{12}D_1D_2\big) + c_8D_2^2\big\}.
\end{gather*}

\subsection[Special case: $c_1\cdot c_2\ne 0$, $c_3 = 0$]{Special case: $\boldsymbol{c_1\cdot c_2\ne 0}$, $\boldsymbol{c_3 = 0}$}

Recursion relations \eqref{recurs1} in a two-component form become
\begin{gather}
 c_1\tilde{\psi} =c_1c_2(v_2\varphi_1 - v_1\varphi_2) - c_2c_6\varphi_1 - c_1c_8\varphi_2, \nonumber\\
 \nabla_c(\tilde{\varphi}) = c_2\big\{\nabla_c(u_2)\varphi_1 - \nabla_c(u_1)\varphi_2 + c_4\varphi_1 + c_5\varphi_2 - \psi\big\}. \label{recurs1a}
\end{gather}
The explicit two-component form of the recursion relations \eqref{recurs1a} is
\begin{gather}
\tilde{\varphi} = \nabla_c^{-1}c_2\big\{\nabla_c(u_2)\varphi_1 - \nabla_c(u_1)\varphi_2 + c_4\varphi_1 + c_5\varphi_2 - \psi\big\},\nonumber\\
\tilde{\psi} = c_2(v_2\varphi_1 - v_1\varphi_2) - \frac{c_2c_6}{c_1}\varphi_1 - c_8\varphi_2. \label{recurs1b}
\end{gather}
In the matrix form \eqref{recurs2}, the recursion operator arising from \eqref{recurs1b} reads
\begin{gather}
 R = \left(
\begin{matrix}
 c_2\nabla_c^{-1}\{\nabla_c(u_2)D_1 - \nabla_c(u_1)D_2 + c_4D_1 + c_5D_2\}, & - c_2\nabla_c^{-1}\vspace{1mm}\\
 c_2(v_2D_1 - v_1D_2) - \displaystyle\frac{c_2c_6}{c_1}D_1 - c_8D_2, & 0
\end{matrix}\right). \label{R1}
\end{gather}

\section{Bi-Hamiltonian systems}\label{bi-Ham}

\subsection{Generic case}

\subsubsection{First family of bi-Hamiltonian systems}

The second Hamiltonian operator $J_1$ is obtained by composing the recursion operator~\eqref{R} with the f\/irst Hamiltonian operator $J_1 = RJ_0$
\begin{gather*}
\left(
\begin{matrix}
 J_1^{11} & J_1^{12} \\
 J_1^{21} & J_1^{22}
\end{matrix}
\right) = \left(
\begin{matrix}
 R_{11} & c_1\nabla_c^{-1} \\
 R_{21} & c_3\nabla_c^{-1}D_2
\end{matrix}
\right) \left(
\begin{matrix}
 0 & 1 \\
 - 1 & \nabla_c(u_2)D_1 - \nabla_c(u_1)D_2 + c_4D_1 + c_5D_2
\end{matrix}
\right),
\end{gather*}
where we have used an alternative equivalent expression for the matrix element $J_0^{22}$, with the f\/inal result
\begin{gather}
J_1 = \left(\begin{matrix}
 -c_1\nabla_c^{-1} & c_3\nabla_c^{-1}D_2 \\
 -c_3\nabla_c^{-1}D_2 & J_1^{22}
\end{matrix}\right),\label{J_1}
\end{gather}
where
\begin{gather*}
 J_1^{22} = c_3L_{12(2)} - c_1L_{12(t)} + \nabla_c^{-1}\big\{c_1\big(c_6D_1^2 + c_7D_1D_2 + c_8D_2^2\big) + c_3D_2(c_4D_1+c_5D_2)\big\}. 
\end{gather*}
Here operator $J_1$ is manifestly skew symmetric. A check of the Jacobi identities and compatibility of the two Hamiltonian structures~$J_0$ and~$J_1$ is straightforward but too lengthy to be presented here. The method of the functional multi-vectors for checking the Jacobi identity and the compatibility of the Hamiltonian operators is developed by P.~Olver in \cite[Chapter~7]{olv} and has been applied recently for checking bi-Hamiltonian structure of the general heavenly equation~\cite{sym} and the f\/irst heavenly equation of Pleba\'nski~\cite{sy} under the well-founded conjecture that this method is applicable for nonlocal Hamiltonian operators as well.

The next problem is to derive the Hamiltonian density $H_0$ corresponding to the second Hamiltonian operator $J_1$ such that implies the bi-Hamiltonian representation of the system~\eqref{2comp}
\begin{gather}
\left(
\begin{matrix}
 u_t\\
 v_t
\end{matrix}
\right) = J_0
\left(
\begin{matrix}
 \delta_u H_1\\
 \delta_v H_1
\end{matrix}
\right) = J_1
\left(
\begin{matrix}
 \delta_u H_0\\
 \delta_v H_0
\end{matrix}
\right) = \left(
\begin{matrix}
 v\\
 v_t
\end{matrix}
\right), \label{be_Ham}
 \end{gather}
where $v_t$ should be replaced by the right-hand side of the second equation in~\eqref{2comp}. Then we could conclude that our system~\eqref{2comp} is also integrable in the sense of Magri~\cite{magri,magri+}.
\begin{Proposition}
Bi-Hamiltonian representation \eqref{be_Ham} of the system \eqref{2comp} is valid under the constraint
\begin{gather}
 c_9 = \frac{1}{c_1^2}\big\{c_6(c_3-c_1c_5) + c_4(c_1c_7 - c_2c_6 + c_3c_4)\big\} \label{c9}
\end{gather}
 with the following Hamiltonian density
\begin{gather}
H_0 = v\left\{\frac{1}{c_1} \nabla_c(u) + \frac{c_4}{c_1}z_2 + \frac{(c_3 - c_1c_5)}{c_1^2}z_1 + s_0\right\} - \frac{c_3}{2c_1^2}u_2\nabla_c(u) + \frac{c_3c_4}{c_1^2} u. \label{H0full}
\end{gather}
\end{Proposition}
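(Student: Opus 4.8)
The plan is to verify the claimed identity $J_1\,(\delta_u H_0,\delta_v H_0)^{\mathrm T}=(v,v_t)^{\mathrm T}$, with $v_t$ replaced by the right-hand side of the second equation of \eqref{2comp}, by a direct computation in which the constraint \eqref{c9} emerges as the single scalar condition needed to match the second component. First I would compute the two variational derivatives of the density \eqref{H0full} with the Euler operator. Since $H_0$ is affine in $v$ and contains no derivatives of $v$, one gets $\delta_v H_0=\partial H_0/\partial v=\frac{1}{c_1}\nabla_c(u)+\frac{c_4}{c_1}z_2+\frac{c_3-c_1c_5}{c_1^2}z_1+s_0$, while the $u$-variation of the terms $v\,u_1$, $v\,u_2$, $u_2\nabla_c(u)$ and $u$ collapses (the two $u_{12}$-contributions adding up, etc.) to $\delta_u H_0=-\frac{1}{c_1}\nabla_c(v)+\frac{c_3}{c_1^2}\nabla_c(u_2)+\frac{c_3 c_4}{c_1^2}$. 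The free constant $s_0$ enters $H_0$ only through $s_0 v$ and is annihilated by $J_1$, which is why it may be left arbitrary.

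Next I would substitute these into $J_1$ from \eqref{J_1}. For the first component, $-c_1\nabla_c^{-1}(\delta_u H_0)+c_3\nabla_c^{-1}D_2(\delta_v H_0)$, the operator $\nabla_c^{-1}$ undoes $\nabla_c$ on the exact pieces and leaves $v$ together with two terms proportional to $\nabla_c^{-1}(1)$ of opposite sign that cancel no matter how that formal symbol is fixed; the result is exactly $v=u_t$. For the second component, $-c_3\nabla_c^{-1}D_2(\delta_u H_0)+J_1^{22}(\delta_v H_0)$, I would expand $J_1^{22}$ in its stated form: $L_{12(2)}$ and $L_{12(t)}$ applied to $\frac{1}{c_1}\nabla_c(u)$ reproduce the Monge--Amp\`ere term $c_3(u_{11}u_{22}-u_{12}^2)$ and the hydrodynamic part $v_1\nabla_c(u_2)-v_2\nabla_c(u_1)$ respectively; the same operators applied to the affine-in-$z$ part of $\delta_v H_0$, together with the $-c_3\nabla_c^{-1}D_2(\delta_u H_0)$ contribution and the nonlocal block of $J_1^{22}$, supply the remaining lower-order part $c_4 v_1+c_5 v_2+c_6 u_{11}+c_7 u_{12}+c_8 u_{22}$; and that nonlocal block, since the constant-coefficient operator $P$ inside it commutes with $\nabla_c$, acts on $\frac{1}{c_1}\nabla_c(u)$ as $\nabla_c^{-1}\big(\frac{1}{c_1}\nabla_c(Pu)\big)=\frac{1}{c_1}Pu$, a purely local expression. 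Collecting everything — and using, where convenient, the standing integrability assumption \eqref{intcon} — the second component equals the right-hand side of the second equation of \eqref{2comp} except for the free term, and demanding that the free term equal $c_9$ is precisely \eqref{c9}.

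An equivalent route, presumably the one by which \eqref{H0full} was found, is to invert $J_1$ rather than guess $H_0$: the first row of $J_1\,(\delta_u H_0,\delta_v H_0)^{\mathrm T}=(v,v_t)^{\mathrm T}$ gives $-c_1\,\delta_u H_0+c_3 D_2\,\delta_v H_0=\nabla_c(v)$, which expresses $\delta_u H_0$ through $\delta_v H_0$; substituting this into the second row yields a single equation for $\delta_v H_0$, which one solves with the ansatz ``linear in $u$ and in $z_1,z_2$''. Its solvability, together with the requirement that the resulting pair $(\delta_u H_0,\delta_v H_0)$ be a genuine variational gradient (the Helmholtz closedness conditions), forces \eqref{c9}; integrating that gradient with the homotopy formula then produces \eqref{H0full}.

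The step I expect to be the main obstacle is the systematic handling of the nonlocal operator $\nabla_c^{-1}$: one must check that every argument to which it is applied actually lies in its image, so that it may be replaced by a local formula modulo $\ker\nabla_c$; that the remaining $\ker\nabla_c$-ambiguities and the explicit $z$-dependence drop out of the final answer; and — the delicate point — that after all local terms are assembled the only surviving scalar obstruction is the constant one, which is exactly what relation \eqref{c9} removes. As in the analogous computations of \cite{sy,sym}, the cleanest way to keep this bookkeeping honest is to interpret $\nabla_c^{-1}$ in the language of differential coverings, treating $\nabla_c^{-1}(1)$ and its relatives as honest nonlocal variables rather than ill-defined antiderivatives.
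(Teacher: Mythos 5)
Your proposal is correct and follows essentially the same route as the paper: the authors likewise reduce the first row of $J_1\,(\delta_u H_0,\delta_v H_0)^{\rm T}=(v,v_t)^{\rm T}$ to $c_3D_2\delta_vH_0-c_1\delta_uH_0=\nabla_c(v)$, substitute this into the second row (via $J_1^{22}=\tfrac{1}{c_1}(B_1+\nabla_c^{-1}c_3^2D_2^2)$) to get a single equation $B_1(\delta_vH_0)=c_1v_t-c_3v_2$, take $H_0=b[u]v+c[u]$, split in $v_1$, $v_2$ to fix $b$, and find that the Monge--Amp\`ere and $u_{ij}$ terms cancel (using \eqref{intcon}) leaving the constant constraint \eqref{c9}, after which $c[u]$ is recovered from the first-row relation. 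Your forward verification and your sketched ``inversion'' route are the same computation read in opposite directions, and your identified delicate points (the handling of $\nabla_c^{-1}$ and the cancellation of all non-constant obstructions) are exactly where the paper invokes \eqref{intcon}.
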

\begin{proof} We will need the following simple relation between $J_1^{22}$ and the operator $B_1$ from \eqref{AiBi}, which is involved in the recursion relations~\eqref{recurs} and Lax pair~\eqref{Lax}
\begin{gather}
 J_1^{22} = \frac{1}{c_1}\big(B_1 + \nabla_c^{-1}c_3^2D_2^2\big). \label{J1B1}
\end{gather}
Acting by the f\/irst row of $J_1$ on the column of variational derivatives of~$H_0$ in~\eqref{be_Ham} and app\-lying~$\nabla_c$ we obtain
\begin{gather}
 \nabla_c\big(J_1^{11}\delta_u H_0 + J_1^{12} \delta_v H_0\big) \equiv c_3D_2\delta_vH_0 - c_1\delta_uH_0 = \nabla_c(v) = c_1v_1 + c_2v_2. \label{1stline}
\end{gather}
On account of the relation \eqref{J1B1}, the second row of the last equation in \eqref{be_Ham} reads
\begin{gather*}
 J_1^{21}\delta_u H_0 + J_1^{22}\delta_v H_0 \equiv \frac{1}{c_1}B_1\delta_v H_0 + \frac{c_3}{c_1}D_2\nabla_c^{-1}(c_3D_2\delta_v H_0 - c_1\delta_u H_0) = v_t, 
\end{gather*}
which with the use of \eqref{1stline} becomes
\begin{gather}
 \frac{1}{c_1}B_1(\delta_v H_0) + \frac{c_3}{c_1} v_2 = v_t\quad\iff\quad B_1(\delta_v H_0) = c_1v_t - c_3v_2. \label{line2}
\end{gather}
We assume a linear dependence of $H_0$ on $v$
\begin{gather}
 H_0 = b[u]v + c[u]\quad\Longrightarrow\quad \delta_v H_0 = \frac{\partial H_0}{\partial v} = b[u], \label{H_v}
\end{gather}
where $b$ and $c$ depend only on $u$ and its derivatives.

We note that adding to $H_0$ the term $av^2$ with constant $a$ does not contribute to the Hamiltonian f\/low
\begin{gather*}J_1
\left(
\begin{matrix}
 \delta_u H_0\\
 \delta_v H_0
\end{matrix}
\right)\end{gather*}
and hence this is unnecessary.

Plugging \eqref{H_v} into equation \eqref{line2} and using the def\/inition of $B_1$ from \eqref{AiBi}, we obtain
\begin{gather}
 c_1\big\{c_3(u_{22}D_1[b] - u_{12}D_2[b]) + c_1(v_1D_2[b] - v_2D_1[b])\big\} + c_1c_6D_1[b]
\nonumber\\
 \qquad{} + (c_1c_7 - c_2c_6 + c_3c_4)D_2[b] = c_1\big\{v_1\nabla_c(u_2) - v_2\nabla_c(u_1) + c_3\big(u_{11}u_{22}-u_{12}^2\big) \nonumber\\
\qquad {} + c_4v_1 + c_5v_2 + c_6u_{11} + c_7u_{12} + c_8u_{22} + c_9\big\} - c_3v_2. \label{line2terms}
\end{gather}
Splitting equation \eqref{line2terms} with respect to $v_1$ and $v_2$ and collecting separately terms with~$v_1$ and~$v_2$ implies the following two equations
\begin{gather}
 D_1[b] = \frac{1}{c_1} \left\{ \nabla_c(u_1)+ \frac{(c_3 - c_1c_5)}{c_1} \right\}, \qquad D_2[b] = \frac{1}{c_1} (\nabla_c(u_2) + c_4 ). \label{D1b D2_b}
\end{gather}
Integrating these two equations we obtain
\begin{gather}
b = \frac{1}{c_1} \nabla_c(u) + \frac{c_4}{c_1}z_2 + \frac{(c_3 - c_1c_5)}{c_1^2}z_1 + s_0, \label{b}
\end{gather}
where $s_0$ is a constant of integration. Plugging~\eqref{D1b D2_b} into the remaining terms in \eqref{line2terms}, we note that the terms $c_1c_3(u_{11}u_{22}-u_{12}^2)$ and all terms proportional to $u_{11}$ and $u_{12}$ are canceled identically while the terms proportional to~$u_{22}$ cancel due to integrability condition~\eqref{intcon}. The remaining constant terms in~\eqref{line2terms} imply the relation~\eqref{c9} which is an additional (``Hamiltonian'') condition for integrable system \eqref{2comp} to have bi-Hamil\-tonian form.

Using the result \eqref{b} for $b$ in our ansatz \eqref{H_v} for $H_0$ yields
\begin{gather}
 H_0 = v\left\{\frac{1}{c_1} \nabla_c(u) + \frac{c_4}{c_1}z_2 + \frac{(c_3 - c_1c_5)}{c_1^2}z_1 + s_0\right\} + c[u]. \label{H0}
\end{gather}
From \eqref{1stline} we have
\begin{gather}
\delta_u H_0 \equiv \delta_u(b[u] v) + \delta_u(c[u]) = \frac{1}{c_1}\big\{c_3D_2[b] - \nabla_c(v)\big\}, \label{del_uH0}
\end{gather}
where \eqref{b} implies that
\begin{gather*}\delta_u(b v) = -\frac{1}{c_1}\nabla_c(v).\end{gather*}
After cancelations, equation \eqref{del_uH0} yields
\begin{gather}
 \delta_u(c[u]) = \frac{c_3}{c_1}D_2[b] \equiv \frac{c_3}{c_1^2}\big(\nabla_c(u_2) + c_4\big), \label{del_uc}
\end{gather}
where $D_2[b]$ from \eqref{D1b D2_b} is used. This result obviously suggests $c[u]$ to be of the form
\begin{gather}
 c[u] = \alpha u_1u_2 + \beta u_2^2 + \gamma(u) \label{c[u]form}
\end{gather}
with constant $\alpha$ and $\beta$, so that $\delta_u(c) = - D_1(\alpha u_2) - D_2(\alpha u_1 + 2\beta u_2) + \gamma'(u)$.
Plugging the latter expression in the equation \eqref{del_uc} we f\/ind the coef\/f\/icients in~\eqref{c[u]form} to be
\begin{gather*}\alpha = -\frac{c_3}{2c_1},\qquad \beta = -\frac{c_2c_3}{2c_1^2},\qquad \gamma(u) = \frac{c_3c_4}{c_1^2}u + \gamma_0\end{gather*}
with the constant of integration $\gamma_0$, so that $c[u]$ in \eqref{c[u]form} becomes
\begin{gather*}
 c[u] = - \frac{c_3}{2c_1^2}u_2\nabla_c(u) + \frac{c_3c_4}{c_1^2}u + \gamma_0. 
\end{gather*}
We have now completely determined the Hamiltonian density $H_0$ in \eqref{H0} to be \eqref{H0full}, where we have skipped the additive constant~$\gamma_0$.
\end{proof}

With the Hamiltonian density $H_0$ from \eqref{H0full}, corresponding to the second Hamiltonian ope\-ra\-tor~$J_1$, the system~\eqref{2comp} admits bi-Hamiltonian representation~\eqref{be_Ham}, provided that the integrability condition~\eqref{intcon} and bi-Hamiltonian constraint~\eqref{c9} are satisf\/ied in the generic case. Thus, we obtain the f\/irst seven-parameter family of bi-Hamiltonian systems.

\subsubsection{Second family of bi-Hamiltonian systems}

Composing an alternative recursion operator $R'$, determined by~\eqref{R'} and \eqref{R'ij}, with the f\/irst Hamiltonian operator~$J_0$ we obtain an alternative second Hamiltonian operator $J'_1=R'J_0$ with the resulting expression
\begin{gather}
 J_1' = \left(
 \begin{matrix}
 - c_2\nabla_c^{-1} & - c_3\nabla_c^{-1}D_1 \vspace{1mm}\\
 c_3\nabla_c^{-1}D_1 & \displaystyle\frac{1}{c_2}\left(c_3^2\nabla_c^{-1}D_1^2 + B_1)\right)
 \end{matrix} \right), \label{J'}
\end{gather}
where $B_1$ is def\/ined in \eqref{AiBi2}. Next task is to determine the Hamiltonian density $H_0'$ corresponding to~$J_1'$ in the bi-Hamiltonian representation
\begin{gather}
 J_0
\left(
\begin{matrix}
 \delta_u H_1\\
 \delta_v H_1
\end{matrix}
\right) = J'_1
\left(
\begin{matrix}
 \delta_u H'_0\\
 \delta_v H'_0
\end{matrix}
\right) = \left(
\begin{matrix}
 v\\
 v_t
\end{matrix}
\right). \label{be_Ham'}
 \end{gather}
A check of the Jacobi identities for $J'_1$ and compatibility of the two Hamiltonian structures~$J_0$ and~$J'_1$ is straightforward and too lengthy to be presented here.
\begin{Proposition}
Bi-Hamiltonian representation \eqref{be_Ham'} of the system \eqref{2comp} is valid under the constraint
\begin{gather}
 c_9 = \frac{1}{c_2^2}\big\{c_8(c_2c_4 + c_3) + c_5(c_1c_8 - c_2c_7 + c_3c_5)\big\} \label{c9'}
\end{gather}
 with the following Hamiltonian density
\begin{gather}
 H'_0 = \frac{v}{c_2}\left\{\nabla_c(u) - c_5 z_1 + \left(c_4 + \frac{c_3}{c_2}\right)z_2\right\} + s_0v + \frac{c_3}{2c_2^2}\big(u_1\nabla_c(u) + 2c_5u\big). \label{H'_02}
\end{gather}
\end{Proposition}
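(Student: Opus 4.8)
The plan is to mirror, step by step, the proof of the previous Proposition, making use of the discrete symmetry \eqref{discrete} that relates the two skew-factorized representations \eqref{AiBi} and \eqref{AiBi2}. First I would write down the analogue of the key identity \eqref{J1B1}: reading off the $(2,2)$-entry of $J_1'$ in \eqref{J'} gives directly $J_1'^{22} = \frac{1}{c_2}\bigl(B_1 + \nabla_c^{-1}c_3^2 D_1^2\bigr)$ with $B_1$ now the operator from \eqref{AiBi2}. Acting with the first row of $J_1'$ on the column $(\delta_u H_0',\delta_v H_0')^{\mathrm T}$ in \eqref{be_Ham'} and applying $\nabla_c$ yields the counterpart of \eqref{1stline}, namely $c_3 D_1\delta_v H_0' + c_2\delta_u H_0' = -\nabla_c(v)$ (signs following the entries of $J_1'$); this expresses $\delta_u H_0'$ in terms of $\delta_v H_0'$ and $v$.

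Next I would substitute the linear ansatz $H_0' = b[u]\,v + c[u]$, so that $\delta_v H_0' = b[u]$, and feed this into the second row of \eqref{be_Ham'}. Using the identity for $J_1'^{22}$, the second row collapses, exactly as in \eqref{line2}, to $B_1(\delta_v H_0') = c_2 v_t + c_3 v_1$ (the sign/index change from $v_2$ to $v_1$ coming from $D_2\!\to\!D_1$ under \eqref{discrete}). Expanding $B_1$ from \eqref{AiBi2} and splitting with respect to $v_1$ and $v_2$ gives two first-order equations for $b$ of the form $D_1[b] = \frac{1}{c_2}(\nabla_c(u_1) - c_5)$, $D_2[b] = \frac{1}{c_2}\bigl(\nabla_c(u_2) + c_4 + c_3/c_2\bigr)$ — the images of \eqref{D1b D2_b} under the discrete symmetry. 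Integrating yields $b = \frac{1}{c_2}\nabla_c(u) - \frac{c_5}{c_2}z_1 + \frac{1}{c_2}\bigl(c_4 + c_3/c_2\bigr)z_2 + s_0$, reproducing the $v$-linear part of \eqref{H'_02}. The leftover non-$v_1,v_2$ terms in the split equation must then be consistent: the Monge–Amp\`ere term and the $u_{11},u_{12}$-proportional terms cancel identically, the $u_{22}$-terms cancel by the integrability condition \eqref{intcon} (now in the form \eqref{intcon} read through the symmetry), and the surviving constants force the Hamiltonian constraint \eqref{c9'}.

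Finally I would determine $c[u]$. From the first-row relation, $\delta_u H_0' = -\frac{1}{c_2}\bigl(c_3 D_1[b] + \nabla_c(v)\bigr)$; since $b$ is already known, $\delta_u(bv) = -\frac{1}{c_2}\nabla_c(v)$, so after cancellation $\delta_u(c[u]) = -\frac{c_3}{c_2}D_1[b] = -\frac{c_3}{c_2^2}\bigl(\nabla_c(u_1)-c_5\bigr)$. This suggests $c[u] = \alpha u_1u_2 + \beta u_1^2 + \gamma(u)$ with constant $\alpha,\beta$; matching $\delta_u(c) = -D_2(\alpha u_1) - D_1(\alpha u_2 + 2\beta u_1) + \gamma'(u)$ to the required expression fixes $\alpha = \tfrac{c_1 c_3}{2c_2^2}$ (the coefficient of $\nabla_c(u_1)=c_1 u_{11}+c_2 u_{12}$ after symmetrization), $\beta$ from the $u_{11}$-term, and $\gamma(u) = \tfrac{c_3 c_5}{c_2^2}u$ up to an additive constant; assembling these gives $c[u] = \tfrac{c_3}{2c_2^2}\bigl(u_1\nabla_c(u) + 2c_5 u\bigr)$, which with the $v$-linear part is precisely \eqref{H'_02}. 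The main obstacle, as in the previous proof, is the bookkeeping in the split of the second-row equation: one must verify that every term not proportional to $v_1$ or $v_2$ cancels, and that the residual constant is exactly \eqref{c9'} — this is where the integrability condition enters and where sign errors under the discrete symmetry \eqref{discrete} are easiest to make. Everything else is a direct transcription of the earlier argument.
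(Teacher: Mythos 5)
Your proposal is correct and follows essentially the same route as the paper's proof: the same reduction of the first row to $\delta_u H_0' = -\tfrac{1}{c_2}(c_3D_1\delta_vH_0'+\nabla_c v)$, the same collapse of the second row to $B_1(\delta_vH_0')=c_2v_t+c_3v_1$, the same linear ansatz and splitting in $v_1$, $v_2$, and the same determination of $b$ and $c[u]$. Only two harmless slips: under the discrete symmetry it is the $u_{11}$-terms (not the $u_{22}$-terms) whose cancellation requires the integrability condition, and your $\alpha$/$\beta$ labels for the coefficients of $u_1u_2$ and $u_1^2$ are swapped, but the final $c[u]$ and constraint \eqref{c9'} you state are exactly right.
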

\begin{proof}
Acting by the f\/irst row of $J'_1$ on the column of variational derivatives of $H'_0$ in \eqref{be_Ham'} and applying $\nabla_c$ we obtain
\begin{gather}
 \delta_uH'_0 = -\frac{1}{c_2}\big(c_3D_1\delta_vH'_0 + \nabla_c v\big). \label{1stline'}
\end{gather}
The second row of equation \eqref{be_Ham'} becomes
\begin{gather*}
c_3\nabla_c^{-1}D_1\delta_u H'_0 + \frac{1}{c_2}\big(c_3^2\nabla_c^{-1}D_1^2 + B_1\big)\delta_v H'_0 = v_t, 
\end{gather*}
which with the use of \eqref{1stline'} becomes
\begin{gather}
 B_1(\delta_v H'_0) = c_2v_t + c_3v_1.\label{line2'}
\end{gather}
We again assume $H'_0$ to be linear in $v$
\begin{gather}
 H'_0 = b[u]v + c[u]\quad\Longrightarrow\quad \delta_v H'_0 = \frac{\partial H'_0}{\partial v} = b[u], \label{H'_v}
\end{gather}
where $b$ and $c$ depend only on $u$ and its derivatives. Plugging~\eqref{H'_v} into equation \eqref{line2'}, using the expression for $B_1$ from~\eqref{AiBi2} and $v_t$ from our equation \eqref{2comp}, we obtain the equation linear in $v_1$ and $v_2$, similar to~\eqref{line2terms}. Equating the coef\/f\/icients of~$v_1$ and~$v_2$ in this linear equation on both sides, we obtain
\begin{gather}
 D_1[b] = \frac{1}{c_2} \{\nabla_c(u_1) - c_5\},\qquad D_2[b] = \frac{1}{c_2} \left\{\nabla_c(u_2) + c_4 + \frac{c_3}{c_2}\right\}. \label{D1b D2_b'}
\end{gather}
Integrating these equations we obtain
\begin{gather}
 b = \frac{1}{c_2}\left\{\nabla_c(u) - c_5 z_1 + \left(c_4 + \frac{c_3}{c_2}\right)z_2\right\} + s_0. \label{b'}
\end{gather}
Plugging the expressions \eqref{D1b D2_b'} in remaining terms of the equation linear in $v_1$ and $v_2$, we f\/ind that all variable terms cancel and we end up with the relation \eqref{c9'} among constant terms. This is again an additional condition for integrable system \eqref{2comp} to have bi-Hamiltonian form which is in general dif\/ferent from \eqref{c9} obtained for our f\/irst choice of the operators~$A_i$,~$B_i$. Thus, we obtain two dif\/ferent families of seven-parameter bi-Hamiltonian systems.

Plugging the expression \eqref{b'} for $b$ into the formula \eqref{H'_v} for $H'_0$, calculating $\delta_uH'_0$ and using equation \eqref{1stline'}, we obtain
\begin{gather*}
 \delta_uc[u] = - \frac{c_3}{c_2^2}(c_1u_{11} + c_2u_{12} - c_5), 
\end{gather*}
which implies
\begin{gather}
 c[u] = \frac{c_3}{2c_2^2}(u_1\nabla_c(u) + 2c_5u) + \gamma_0.
 \label{c'(u)}
\end{gather}
Using our results \eqref{b'} for $b$ and \eqref{c'(u)} for $c[u]$ in the formula \eqref{H'_v}, we obtain the Hamiltonian density $H'_0$ to be \eqref{H'_02} for the alternative bi-Hamiltonian system~\eqref{be_Ham'} in the generic case. Here again we have skipped the additive constant~$\gamma_0$.
\end{proof}

\subsubsection{A family of tri-Hamiltonian systems}

If we impose an additional condition on the coef\/f\/icients such that both constraints~\eqref{c9} and \eqref{c9'} coincide, we eliminate $c_9$ and consider the resulting condition together with the integrability condition~\eqref{intcon} and the constraint~\eqref{c9}. With these three conditions, we obtain a six-parameter family of tri-Hamiltonian systems
\begin{gather*}
\left(
\begin{matrix}
 u_t\\
 v_t
\end{matrix}
\right) = J_0
\left(
\begin{matrix}
 \delta_u H_1\\
 \delta_v H_1
\end{matrix}
\right) = J_1
\left(
\begin{matrix}
 \delta_u H_0\\
 \delta_v H_0
\end{matrix}
\right) = J'_1
\left(
\begin{matrix}
 \delta_u H'_0\\
 \delta_v H'_0
\end{matrix}
\right) 
 \end{gather*}
with compatibility of the three Hamiltonian operators $J_0$, $J_1$, $J'_1$ been checked. An explicit solution of the two constraints together with the integrability condition \eqref{intcon} for the tri-Hamiltonian system has the form
\begin{gather*}
 c_4 = \frac{1}{c_2c_3}\big(c_2^2c_6 - c_1^2c_8 - c_1c_3c_5\big),\qquad c_7 = \frac{1}{c_1c_2} \big(2c_1^2c_8 + 2c_1c_3c_5 - c_3^2\big),\nonumber\\
c_9 = \frac{(c_3c_5 + c_1c_8)}{c_1c_2^2c_3}\big\{c_3(c_3 - c_1c_5) - c_1^2c_8\big\} + \frac{c_6c_8}{c_3} 
\end{gather*}
with the six free parameters $c_1$, $c_2$, $c_3$, $c_5$, $c_6$, $c_8$.

\subsection{Particular cases}

The f\/irst case $c_1=0$, $c_2\ne 0$ we should treat as a particular case of the second bi-Hamiltonian family. Therefore, we put $c_1=0$ in the second Hamiltonian operator $J'_1$ from~\eqref{J'}
\begin{gather*}
 J'_1 = \left(
 \begin{matrix}
 - D_2^{-1} & - \displaystyle \frac{c_3}{c_2}D_2^{-1}D_1 \vspace{1mm}\\
 \displaystyle \frac{c_3}{c_2}D_2^{-1}D_1 & \displaystyle \frac{c_3^2}{c_2^2}D_2^{-1}D_1^2 + \frac{1}{c_2}B_1
 \end{matrix}
 \right), 
\end{gather*}
where $B_1$ is def\/ined in \eqref{AiBic1=0}, so that explicitly we have
\begin{gather*}
 (J'_1)^{22} = - (c_3L_{12(1)} + c_2L_{12(t)}) + \left(c_7 - \frac{c_3c_5}{c_2}\right) D_1 + c_8 D_2.
\end{gather*}
The corresponding Hamiltonian density is a particular case of the density $H'_0$ from \eqref{H'_02} at $c_1=0$
\begin{gather*}
 H'_0 = v\left\{u_2 - \frac{c_5}{c_2} z_1 + \frac{c_6}{c_3}z_2\right\} + s_0v + \frac{c_3}{2c_2}u_1u_2 + \frac{c_3c_5}{c_2^2} u,
\end{gather*}
where the integrability condition $c_3(c_2c_4+c_3)=c_2^2c_6$ has been used and an additional constraint has the form
\begin{gather*}
 c_9 = \frac{1}{c_2^2}\{c_8(c_2c_4 + c_3) + c_5(c_3c_5 - c_2c_7)\}. 
\end{gather*}

The second case $c_2=0$, $c_1\ne 0$ is treated as a particular case of the f\/irst bi-Hamiltonian family, so we put $c_2=0$ in the second Hamiltonian operator $J_1$ from \eqref{J_1}
\begin{gather*}
 J_1 =\left(
 \begin{matrix}
 - D_1^{-1} & \displaystyle \frac{c_3}{c_1}D_1^{-1}D_2 \\
 - \displaystyle \frac{c_3}{c_1}D_1^{-1}D_2 & \displaystyle \frac{c_3^2}{c_1^2}D_1^{-1}D_2^2 + \frac{1}{c_1}B_1
 \end{matrix}
 \right), 
\end{gather*}
where $B_1$ is def\/ined in \eqref{AiBic2=0}, so that we have explicitly
\begin{gather*}
 (J_1)^{22} = \frac{c_3^2}{c_1^2}D_1^{-1}D_2^2 + c_3L_{12(2)} - c_1L_{12(t)}) + c_6D_1 + \left(c_7 + \frac{c_3c_4}{c_1}\right) D_2.
\end{gather*}
The corresponding Hamiltonian density is a particular case of the density $H_0$ from \eqref{H0full} at $c_2=0$
\begin{gather*}
H_0 = v\left(u_1 + \frac{c_4}{c_1}z_2 + \frac{c_8}{c_3}z_1 + s_0\right) - \frac{c_3}{2c_1}u_1u_2 + \frac{c_3c_4}{c_1^2} u,
\end{gather*}
where integrability condition \eqref{spec2'} has been used, with the additional constraint
\begin{gather*}
 c_9 = \frac{1}{c_1^2}\big\{c_6(c_3 - c_1c_5) + c_4(c_1c_7 + c_3c_4)\big\}. 
\end{gather*}

\subsection[Special case $c_3=0$]{Special case $\boldsymbol{c_3=0}$}

The second Hamiltonian operator is a composition $J_1 = RJ_0$ of the recursion operator \eqref{R1} for this special case with the f\/irst Hamiltonian operator~$J_0$. Multiplication of these two matrices yields
\begin{gather*}
 J_1 = \left(
 \begin{matrix}
 c_2\nabla_c^{-1} & 0 \\
 0 & \displaystyle\frac{1}{c_1} B_1
 \end{matrix}
 \right),
\end{gather*}
where
\begin{gather*}
 J_1^{22} = - c_2\left(v_1D_2 - v_2D_1 + \frac{c_6}{c_1}D_1 + \frac{c_8}{c_2}D_2\right) = \frac{1}{c_1} B_1 
\end{gather*}
with $B_1$ def\/ined in \eqref{AiBispec1}.

The remaining problem is to obtain the Hamiltonian density $H_0$ corresponding to the second Hamiltonian operator $J_1$ according to~\eqref{2comp}
\begin{gather}
 J_1
\left(
\begin{matrix}
 \delta_u H_0\\
 \delta_v H_0
\end{matrix}
\right) = \left(
\begin{matrix}
 v\\
 v_t
\end{matrix}
\right). \label{be_Ham1}
 \end{gather}
\begin{Proposition}
Bi-Hamiltonian representation \eqref{be_Ham1} of the system \eqref{2comp} is valid under the constraint
\begin{gather}
 c_9 = \frac{(c_1c_4c_8-c_2c_5c_6)}{c_1c_2} \label{c9_1}
\end{gather}
 with the following Hamiltonian density
\begin{gather}
 H_0 = \frac{v}{c_2}\{-\nabla_c(u) + c_5z_1 - c_4z_2 + s_0\}. \label{H01}
\end{gather}
\end{Proposition}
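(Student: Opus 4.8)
The plan is to follow the same two-step strategy used in the Propositions for the generic cases. In the present case the second Hamiltonian operator $J_1$ is block-diagonal, with diagonal blocks $c_2\nabla_c^{-1}$ and $\tfrac{1}{c_1}B_1$, where $B_1$ is the operator in \eqref{AiBispec1}; hence the two scalar equations contained in \eqref{be_Ham1} decouple. The first row gives $c_2\nabla_c^{-1}(\delta_uH_0)=v$, equivalently $\delta_uH_0=\tfrac{1}{c_2}\nabla_c(v)$, and the second row gives $B_1(\delta_vH_0)=c_1v_t$. I would treat the second equation first, using the familiar ansatz $H_0=b[u]v+c[u]$ with $b$, $c$ depending only on $u$ and its $z$-derivatives, so that $\delta_vH_0=b$.

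Substituting $B_1=c_1c_2L_{12(t)}-c_2c_6D_1-c_1c_8D_2=c_1c_2(v_2D_1-v_1D_2)-c_2c_6D_1-c_1c_8D_2$ and replacing $v_t$ by the right-hand side of \eqref{2comp} with $c_3=0$, the identity $B_1[b]=c_1v_t$ is linear in $v_1$, $v_2$. Splitting off the coefficients of $v_1$ and $v_2$ gives the first-order system $D_1[b]=\tfrac{1}{c_2}(c_5-\nabla_c(u_1))$, $D_2[b]=-\tfrac{1}{c_2}(\nabla_c(u_2)+c_4)$, whose compatibility $D_2D_1[b]=D_1D_2[b]$ is automatic; integration then yields $b=\tfrac{1}{c_2}\{-\nabla_c(u)+c_5z_1-c_4z_2+s_0\}$ with $s_0$ the constant of integration, which is precisely the $v$-coefficient in \eqref{H01}.

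Next I would substitute $D_1[b]$ and $D_2[b]$ back into the part of $B_1[b]=c_1v_t$ free of $v_1$, $v_2$, expanding $\nabla_c(u_1)=c_1u_{11}+c_2u_{12}$ and $\nabla_c(u_2)=c_1u_{12}+c_2u_{22}$. The coefficients of $u_{11}$ and of $u_{22}$ match on the two sides identically, while the coefficient of $u_{12}$ on the left, namely $(c_2^2c_6+c_1^2c_8)/c_2$, equals $c_1c_7$ precisely because of the integrability condition $c_1c_2c_7=c_1^2c_8+c_2^2c_6$ assumed in this case. Thus all $u$-dependent terms cancel, and equating the remaining constants forces the ``Hamiltonian'' constraint $c_9=(c_1c_4c_8-c_2c_5c_6)/(c_1c_2)$, that is, \eqref{c9_1}.

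Finally I would verify the first-row equation: a direct variational computation gives $\delta_u(b[u]v)=\tfrac{1}{c_2}\nabla_c(v)$, since the $z_1$, $z_2$ and $s_0$ terms of $b$ contribute nothing, so $\delta_uH_0=\tfrac{1}{c_2}\nabla_c(v)+\delta_u(c[u])$; comparison with the requirement $\delta_uH_0=\tfrac{1}{c_2}\nabla_c(v)$ forces $\delta_u(c[u])=0$, hence $c[u]$ is a total divergence and may be dropped, leaving exactly $H_0$ as in \eqref{H01}. The one step carrying any real content is the cancellation of the $u_{12}$-term, which hinges on the integrability condition and would fail without it; everything else is routine bookkeeping entirely parallel to the proofs of the two preceding Propositions, and the Hamiltonian property of $J_1$ together with its compatibility with $J_0$ can be checked by the functional multivector method used there.
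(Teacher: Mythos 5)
Your proposal is correct and follows essentially the same route as the paper: the block-diagonal form of $J_1$ decouples the two rows, the ansatz $H_0=b[u]v+c[u]$ reduces the second row to the pair $D_1[b]=\tfrac{1}{c_2}(c_5-\nabla_c(u_1))$, $D_2[b]=-\tfrac{1}{c_2}(\nabla_c(u_2)+c_4)$, the residual $u_{12}$-term cancels via the integrability condition $c_1c_2c_7=c_1^2c_8+c_2^2c_6$, the constants give \eqref{c9_1}, and the first row forces $\delta_u c[u]=0$. The only (immaterial) difference is that you process the second row before the first, whereas the paper does the reverse.
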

\begin{proof}
The equation following from the f\/irst row of the matrix equation \eqref{be_Ham1} is $c_2\nabla_c^{-1}\delta_u H_0 = v$ or equivalently
\begin{gather}
 \delta_u H_0 = \frac{1}{c_2}\nabla_c(v) = \frac{c_1}{c_2}v_1 + v_2.\label{1st}
\end{gather}
The second row of \eqref{be_Ham1} yields
\begin{gather}
\frac{1}{c_1} B_1(\delta_vH_0) = v_t. \label{2nd}
\end{gather}
We assume again a linear dependence of $H_0$ on $v$
\begin{gather}
 H_0 = b[u]v + c[u] \quad \Longrightarrow \quad \delta_vH_0 = b[u], \label{H_0v}
\end{gather}
where $b$ and $c$ may depend on $u$ and its partial derivatives.

 Using \eqref{H_0v} and also using the def\/inition of $B_1$ and $v_t$, given by \eqref{2comp}, on the left-hand side and right-hand side of \eqref{2nd}, respectively, we obtain
\begin{gather}
 c_2(v_1D_2[b] - v_2D_1[b]) + \frac{c_2c_6}{c_1}D_1[b] + c_8D_2[b]\nonumber\\
 \qquad{} = - \big\{v_1\nabla_c(u_2) - v_2\nabla_c(u_1) + c_4v_1 + c_5v_2 + c_6u_{11} + c_7u_{12} + c_8u_{22} + c_9\big\}. \label{2nda}
\end{gather}
Splitting this equation with respect to $v_1$ and $v_2$ we obtain two equations
\begin{gather*} D_1[b] = -\frac{1}{c_2}\nabla_c(u_1) + \frac{c_5}{c_2},\qquad D_2[b] = -\frac{1}{c_2}\nabla_c(u_2) - \frac{c_4}{c_2}, \end{gather*}
which are integrated in the form
\begin{gather}
 b = \frac{1}{c_2}\big\{{-}\nabla_c(u) + c_5z_1 - c_4z_2 + s_0\big\}. \label{b1}
\end{gather}
Using \eqref{b1}, we check that all the remaining nonconstant terms in \eqref{2nda} cancel and we obtain an additional relation~\eqref{c9_1} between coef\/f\/icients. Plugging the expression~\eqref{b1} for~$b$ into~\eqref{H_0v} for~$H_0$ we obtain
\begin{gather}
 H_0 = \frac{v}{c_2}\big\{{-}\nabla_c(u) + c_5z_1 - c_4z_2 + s_0\big\} + c[u]. \label{H0_1}
\end{gather}
Computing $\delta_u H_0$ from \eqref{H0_1} and plugging the result into \eqref{1st}, we conclude that $\delta_u c[u] = 0$ and hence we may choose $c[u] = 0$. Thus, we end up with the formula \eqref{H01}.
\end{proof}

\section{Conclusion}

We have shown that all the Euler--Lagrange equations of the evolutionary Hirota type in $(2+1)$ dimensions have the symplectic Monge--Amp\`ere form. The symmetry condition for such an equation is converted to a skew-factorized form. Then recursion relations and Lax pairs are obtained as immediate consequences of this representation. We have converted the equation into a two-component evolutionary form and obtained Lagrangian and recursion operator for this two-component system. The Lagrangian is degenerate because the momenta cannot be inverted for the velocities.
 Applying to this degenerate Lagrangian the Dirac's theory of constraints, we have obtained a symplectic operator and its inverse, the latter being a Hamiltonian opera\-tor~$J_0$. We have found the corresponding Hamiltonian density~$H_1$, thus presenting our system in a~Hamiltonian form.
Composing the recursion operator~$R$ with~$J_0$, we have obtained the second Hamiltonian operator $J_1=RJ_0$. We have found the Hamiltonian density $H_0$ corresponding to $J_1$ and thereby obtained bi-Hamiltonian representation of our system under one constraint on the coef\/f\/icients, which is additional to the integrability condition. Thus, we end up with a~seven-parameter class of bi-Hamiltonian systems in $(2+1)$ dimensions.

Representation of symmetry condition in an alternative skew-factorized form produces another seven-parameter family of bi-Hamiltonian systems which satisfy another constraint, additional to the integrability condition. This constraint is dif\/ferent from the additional constraint for the f\/irst family of bi-Hamiltonian systems. If we require both constraints to coincide, then, accounting also for the integrability condition, we end up with a six-parameter family of tri-Hamiltonian systems.

We apply similar methods to Hirota equations in $3+1$ dimensions with similar results. This work is in progress and will be published elsewhere.

\subsection*{Acknowledgments}

The authors are grateful to an anonymous referee for important remarks which contributed to the improvement of our paper. The research of M.B.~Sheftel is partly supported by the research grant from Bo\u{g}azi\c{c}i University Scientif\/ic Research Fund (BAP), research project No.~11643.

\pdfbookmark[1]{References}{ref}
\LastPageEnding

\end{document}